\newcommand{\id}{\ensuremath{\mathds{1}}}
\newcommand{\kb}[2]{|#1\rangle\langle#2|} 
\newcommand{\Inst}{\mathcal{I}}
\newtheoremstyle{mystyle}
  {6pt}
  {6pt}
  {\normalfont}
  {0pt}
  {\bf}
  {.}
  { }
  {}
\theoremstyle{mystyle}
\newtheorem*{lemma*}{Lemma}
\newtheorem*{corollary*}{Corollary}
\newcommand{\A}{\mathsf{A}}
\newcommand{\B}{\mathsf{B}}
\newcommand{\C}{\mathsf{C}^U}
\newcommand{\W}{\mathsf{W}}
\newcommand{\K}{\mathsf{K}}
\newcommand{\obsA}{\mathcal{A}}
\newcommand{\obsB}{\mathcal{B}}
\newcommand{\obsW}{\mathcal{W}}
\begin{document}
\author{Konstantin Beyer}
\affiliation{Institut f\"ur Theoretische Physik, Technische Universit\"at Dresden, D-01062, Dresden, Germany}
\author{Roope Uola}
\affiliation{Department of Applied Physics, University of Geneva, 1211 Geneva, Switzerland}
\author{Kimmo Luoma}
\affiliation{Institut f\"ur Theoretische Physik, Technische Universit\"at Dresden, D-01062, Dresden, Germany}
\affiliation{Turku Center for Quantum Physics, Department of Physics and Astronomy,
University of Turku, FI-20014, Turun Yliopisto, Finland}
\author{Walter T. Strunz}
\affiliation{Institut f\"ur Theoretische Physik, Technische Universit\"at Dresden, D-01062, Dresden, Germany}

\title{Joint measurability in non-equilibrium quantum thermodynamics}
\date{\today}

\begin{abstract}
{In this Letter we investigate  {the concept of} quantum work  {and its measurability} from the viewpoint of quantum measurement theory.}
{Very often}, quantum work and fluctuation theorems are discussed in the framework of projective two-point measurement (TPM) schemes. According to a well known no-go theorem, there is no work observable which satisfies both (\ref{average_cond}) an average work condition and (\ref{eq:fluctuation-condition}) the TPM statistics for diagonal input states.
 {Such} projective measurements  {represent a restrictive class among all possible measurements.} It is desirable,  {both from a theoretical and experimental point of view, to extend the scheme to the general case including suitably designed unsharp measurements. This shifts the focus to the question what information about work and its fluctuations one is able to extract from such generalized measurements.} 
 {We show that} the no-go theorem no longer holds if the observables in a TPM scheme are jointly measurable for any intermediate unitary evolution.
We explicitly construct  {a model} with  {unsharp} energy measurements and derive bounds for the  {visibility} that ensure joint measurability. In such an  {unsharp} scenario a single work measurement apparatus can be  {constructed}  {that allows us to} determine the correct average work and to obtain free energy differences with the help of a Jarzynski equality.   
\end{abstract}

\maketitle

\paragraph{Introduction ---} 
{A suitable definition of work in quantum systems has proven to be elusive as witnessed
by a vast amount of different proposals in the last two decades} \cite{chernyak_2004,TalknerFluctuationtheoremsWork2007,huberEmployingTrappedCold2008,EspositoNonequilibriumfluctuationsfluctuation2009,CampisiColloquiumQuantumfluctuation2011,DornerExtractingQuantumWork2013,RoncagliaWorkMeasurementGeneralized2014,salmilehtoQuantumDrivingWork2014,DeffnerQuantumworkthermodynamic2016,AbergFullyQuantumFluctuation2018,LostaglioQuantumFluctuationTheorems2018,dongFunctionalFieldIntegral2019,feiGrouptheoreticalApproachCalculation2019,ortegaWorkDistributionsQuantum2019,micadeiQuantumFluctuationTheorems2020,Beyer2020,silvaQuantumMechanicalWork2021,gherardiniEndpointMeasurementApproach2021}.
Among them the two-point measurement (TPM) scheme \cite{tasakiJarzynskiRelationsQuantum2000,kurchanQuantumFluctuationTheorem2001,TalknerFluctuationtheoremsWork2007,guarnieriQuantumWorkStatistics2019,talknerColloquiumStatisticalMechanics2020} has become one of the standard approaches. Its popularity is certainly related to its ability to directly recover fluctuation theorems known from classical non-equilibrium thermodynamics, such as the ones by Jarzynski and Crooks~\cite{DeffnerNonequilibriumworkdistribution2008,albashFluctuationTheoremsQuantum2013,RoncagliaWorkMeasurementGeneralized2014,JarzynskiQuantumClassicalCorrespondencePrinciple2015,Gooldrolequantuminformation2016,millerTimereversalSymmetricWork2017,HayashiMeasurementbasedformulationquantum2017,zaninExperimentalQuantumThermodynamics2019,ribeiroExperimentalStudyGeneralized2020}. 
Despite the formal agreement of the quantum fluctuation theorems with their classical counterparts, several conceptual differences {remain}. Most prominently, the work obtained by two projective quantum energy measurements is generally not consistent with the average energy change in a closed system~\cite{AllahverdyanFluctuationsworkquantum2005,TalknerFluctuationtheoremsWork2007,AllahverdyanNonequilibriumquantumfluctuations2014}. {Rather recently, it was rigorously} shown that this discrepancy is fundamental and cannot be solved by a more clever definition of fluctuating work: The no-go theorem {of} Ref.~\cite{Perarnau-LlobetNoGoTheoremCharacterization2017}  {(see also Ref.~\cite{hovhannisyanEnergyConservationJarzynski2021})} states that there is no quantum work observable which satisfies the work fluctuation statistics of the projective TPM scheme and  {also} reproduces the average change of energy.

A projective TPM scenario consists of the following consecutive operations.
First, a closed quantum system is initialized in a state $\rho$.  {Then an energy measurement with respect to an initial Hamiltonian $\hat H_A=\sum_a E_a\,\Pi_a$ is performed and an outcome $a$ is recorded. A unitary process $U_{A\rightarrow B}$ generated by a time dependent system Hamiltonian is applied to the output state 
and a second energy measurement with respect to the final Hamiltonian $\hat H_B = \sum_b E'_b\,\Pi'_b$ is implemented, yielding outcome $b$}.
 {The energy change in the closed system is interpreted as work $w(a,b) = E'_b-E_a$. If we are only interested in the work values $w$ but not in the initial and final energies $E_a$ and $E'_b$ themselves, we may
ask if the TPM protocol can be replaced by a single quantum measurement $\obsW$ which
directly yields $w$.}

 {
Work should be defined for any process but its value is process dependent.
Thus, we look for an observable $\obsW$ that exists for any process $U_{A\rightarrow B} = U$ but may depend on it.}
{The no-go theorem is a result
of the clash of physically well motivated constraints set for such a work observable $\obsW$.
 {(I)} Averaging over all outcomes $w$ should reproduce the average energy change: $\Tr[\obsW\rho]=\Tr[\hat H_B \, U\rho U^\dagger]-\Tr[\hat H_A\,\rho]$.  {(II)} For initial states $\rho_{\Delta}$, {diagonal in the basis of the first energy measurement $\{\Pi_a\}$}, the statistics of $\obsW$ should agree with the statistics of the sequential measurement of $\hat H_A$ and 
$\hat H_B$: $p_{\obsW}(w=E'_b-E_a)=p_{\mathrm{TPM}}(a,b)$.  {Here one presumes the non-degeneracy of $w$, which is necessary to  prove the no-go theorem~\cite{Perarnau-LlobetNoGoTheoremCharacterization2017}. Later we will replace this assumption by a more general one.}}

{The above constraints are motivated by classical thermodynamics. The  {first} constraint   {(I)} states that the average work is equal to the average change of internal energy in a closed system. 
If the initial state is a Gibbs state, the
 {second} constraint  {(II)} ensures that the work satisfies Jarzynski's fluctuation theorem well known from classical stochastic thermodynamics.}

Projective energy measurements are an idealization.  {In the past it has often proven insightful to depart from this severe limitation and to consider generalized measurements described by a \textit{positive operator valued measure} (POVM), which can be seen as a projective measurement on an enlarged Hilbert space.
Most prominently and relevant for this paper, the compatibility of two POVMs is no longer determined by their commutativity as we know it for projective measurements~\cite{busch_unsharp_1986}.  In certain scenarios POVMs perform better than projective measurements. For instance, the optimal measurement for quantum state tomography is a POVM~\cite{RenesSymmetricInformationallyComplete2004} and 
POVMs can lead to stronger violations of some Bell inequalities~\cite{vertesiTwoqubitBellInequality2010}.
POVMs are also of experimental relevance. It has been shown that an ideal projective measurement would require infinite resources~\cite{guryanovaIdealProjectiveMeasurements2020}. Real experiments are often indirect and the actual measurement is indeed better described by a POVM with unsharp elements.}

 {Using POVMs, the question arises of how to estimate the desired quantity from the measurement outcomes.}
In the projective case it is clear that a given detector click can be associated with an eigenvalue of the Hamiltonian.  {Once we deviate from the projective scenario we have to carefully analyze what we can actually learn from the measurement outcomes. The latter are the only quantities we}  {can access in an experiment.}

TPM schemes beyond the projective case have been proposed, for example, in Ref.~\cite{prasannavenkateshTransientQuantumFluctuation2014,RasteginQuantumFluctuationsRelations2018,itoGeneralizedEnergyMeasurements2019,debarbaWorkEstimationWork2019}. 
There, the authors study different generalized energy measurements, based, for instance, on Gaussian pointers, and analyze their thermodynamic implications, in particular with respect to fluctuation theorems.  
In the present work we add an aspect to the study of generalized TPM schemes which, to the best of our knowledge, has not been addressed until now: Are given generalized energy measurements able to  {provide the correct average work and satisfy a fluctuation theorem?} 
What are the fundamental limitations for a generalized TPM scheme to allow for a ({unsharp}) work observable $\obsW$? This question is not only of theoretical interest but also experimentally relevant, since non-projective measurements are often easier to implement, especially if the measurement has to be non-destructive, as it is the case for the first measurement in a TPM scheme.
Instead of projective energy measurements given by the Hamiltonians $\hat H_{A,B}$, we consider two general POVMs $\A$ and $\B$ at the beginning and at the end of the protocol, respectively. $\A$ and $\B$ can be thought of as unsharp versions of the projective energy measurements. However, we keep them fully general for the moment and only later investigate their properties for a  {widely used class of unsharp POVMs.} 

{In a projective measurement each projector can unambiguously be associated with a corresponding eigenvalue $\Pi_a \rightarrow E_a$. Crucially, such a correspondence does not exist for POVMs, as the latter are not given by the spectral decomposition of a Hermitian operator. The energy associated with a certain POVM element $\A_a$ can in general only be an estimate based on additional assumptions. To reflect and emphasize this ambiguity, we introduce energy assignment functions that yield an energy estimate for the specific outcomes of the POVM $\A_a \rightarrow f(a)$ and $\B_b \rightarrow g(b)$. Only in the limiting case of projective energy measurements their ambiguity vanishes and $f(a) = E_a$ and $g(b) = E'_b$.}

 {Refining the constraints  {(I) and (II)} for the case of general observables, we find that the existence of $\obsW$ is closely related to the concept of joint measurability~\cite{busch_unsharp_1986,heinosaari_mathematical_2011,uola_adaptive_2016,skrzypczykComplexityCompatibleMeasurements2020}. We show that the no-go theorem can be extended to all energy POVMs $\A$ and $\B$ that are incompatible for at least some intermediate evolution $U$. In turn, if $\A$ and $\B$ are jointly measurable for any process $U$, the no-go statement generally does not apply. We illustrate that with a specific model for unsharp energy measurements.}

\begin{figure}[ht]
  \centering
  \includegraphics[width=.8\columnwidth]{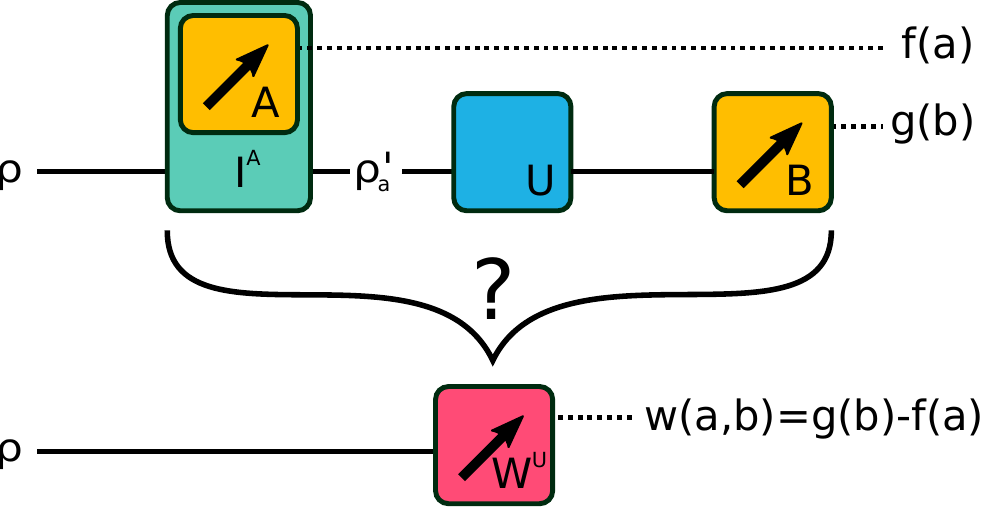}
  \caption{In a general TPM scheme a system is prepared in a state $\rho$ and measured by an initial measurement $\A$. This first measurement is implemented by an instrument $\Inst^\A$ that outputs a post measurement state $\rho_a'$ depending on the outcome $a$. After an arbitrary unitary evolution $U$ a final measurement $\B$ is implemented and outcome $b$ is obtained. For each outcome $a$ and $b$, an energy  {estimate} $f(a)$ and $g(b)$ is assigned and a work value $w(a,b)=g(b) - f(a)$ can be determined. In the case of projective measurements the  {estimates} $f(a)$ and $g(b)$ can uniquely be identified with the energy eigenvalues $E_a$ and $E'_b$. For general POVMs such a correspondence is missing.  
  We investigate whether such a scheme can be replaced by a single observable $\obsW^U$ that (\ref{average_cond}) reproduces the correct average work for any input state as well as (\ref{eq:fluctuation-condition}) the full statistics for input states diagonal in the initial energy eigenbasis of the system  {(see main text for precise formulations)}. For the projective case it is known that such an observable cannot exist. In contrast, if $\A$ and $\B$ are   {jointly measurable} POVMs for any $U$, the answer can be positive.}
  \label{fig:scheme}
\end{figure}

\vspace{1em}
\paragraph{{Definitions ---}}
We restrict ourselves to measurements with discrete, i.e., countably many outcomes. 
A discrete POVM is described by a set of \textit{effects} $\A = \{\A_a\}$, with positive elements $\A_a \geq 0$ and $\sum_a \A_a = \id$. 
The probability to obtain outcome $a$ when measuring a system in state $\rho$ is given by Born's rule: $p(a) = \Tr[\A_a\,\rho]$.

As motivated above the energy estimate for a specific outcome $a$ is given by a function $f(a)$. We can then define a generalized (unsharp) energy observable $\obsA$ as a tuple $(\A,f)$.  
Two observables $\mathcal A$ and $\mathcal{A}'$ that only differ in the function $f$ can be measured with the same apparatus as they share a common set of effects $\A$.
The average operator of $\obsA$ is defined as $\hat A_f = \sum_a f(a)\,\A_a$.

Two {POVMs $\A$ and $\B$} with $m$ and $n$ outcomes, respectively, are said to be \textit{jointly measurable} if they are the marginals of a third {POVM} $\W$ with $m \times n$ outcomes, i.e., $\A_a = \sum_b \W_{ab}$, $\B_b = \sum_a \W_{ab}$~\cite{busch2016quantum}. Two measurements that are not jointly measurable are \textit{incompatible}.
For projective measurements, joint measurability reduces to commutativity~\cite{heinosaari_mathematical_2011,heinosaari_invitation_2016,uola_adaptive_2016}.

The effects of $\A$ only determine the probabilities of the measurement outcomes. In a TPM scheme we also have to specify the post-measurement state of measurement $\A$. This can be done with the concept of \textit{quantum instruments}~\cite{davies1976quantum,heinosaari_mathematical_2011}.
An $\A$-compatible instrument $\Inst^{\A}$ is a set of completely positive maps $\{\Inst^\A_a\}$ that satisfy $\Tr[\Inst^\A_a(\rho)] = \Tr[\A_a \, \rho],\,\forall \rho$.
Any $\A$-compatible instrument can be written as $	\Inst^\A_a(\rho) = \mathcal{E}_a [\A_a^{{1}/{2}} \rho \, \A_a^{{1}/{2}}],$
where $\mathcal{E}_a$ is a completely positive trace-preserving (CPT) map which may depend on the outcome of the measurement. 
If $\mathcal{E}_a = \id,\,\forall a$, the instrument is called the \textit{Lüders instrument} of $\A$~\cite{heinosaari_mathematical_2011,busch2016quantum}. 
{Note that in the projective TPM scheme the Lüders instrument is usually tacitly assumed.}

\vspace{1em}
\paragraph{{A general TPM scheme ---}}
A \textit{general TPM} (GTPM) scheme is depicted in Fig.~\ref{fig:scheme}. An observable \mbox{$\obsA =(\A, f)$}, whose effects $\A$ are implemented by an instrument $\Inst^\A$, is measured on a system initialized in state $\rho$. After a unitary evolution $U$, a second observable \mbox{$\obsB=(\B, g)$} is measured. The probability to obtain an outcome sequence $(a,b)$ is given by  
\begin{align}
    p_\textrm{GTPM}(a,b) = \Tr[\B_b\, U \, \Inst_a^\A(\rho) \, U^\dagger] = \Tr[\B_b^U\, \Inst_a^\A(\rho)],
\end{align} 
where we have absorbed the intermediate evolution to the second measurement with the Heisenberg {picture} operators $ \B^U_b = U^\dagger \, \B_b \, U$.

\vspace{1em}
\paragraph{{Work observable $\obsW$} ---}
Can the GTPM scheme be replaced by a generalized work measurement? We consider an observable of the form \mbox{$\obsW^U = (\W^U,w(a,b) = g(b) - f(a))$}, where the superscript $U$ indicates the dependence on the intermediate evolution. 
The following two conditions shall apply.
\begin{enumerate}[(i)] 
    \item\label{average_cond} \textit{Average condition}: The average operator of $\obsW^U$ agrees with the difference of the average operators of $\obsB^U$ and $\obsA$ for all choices of $f$ and $g$.
    \begin{align*}
        \hat W^U_w = \hat B^U_g - \hat A_f, \quad \forall f,\,g.
    \end{align*}
    
    \item\label{eq:fluctuation-condition} \textit{Fluctuation condition:}
    For states $\rho_\Delta$ diagonal in {the initial energy eigenbasis} the statistics given by $\W^U$ agrees with the sequentially measured GTPM statistics:
    \begin{align*}
        \Tr[\W^U_{a b}\, \rho_\Delta] = \Tr[\B_b^U\,\Inst^\A_a(\rho_\Delta)].
    \end{align*}
\end{enumerate}
We note that in the first condition, the subscripts refer to energy assignment functions, whereas the subscripts in the second condition refer to measurement outcomes.  {These conditions can be seen as generalizations of the conditions (I) and (II) in the projective case.}

\begin{lemma*}\label{lemma:1}
An observable $\obsW^U$ satisfying the average condition~(\ref{average_cond}) can be found for any choice of $f$ and $g$ and any intermediate evolution $U$ if and only if $\A$ and $\B^U$ are jointly measurable for any $U$.
\end{lemma*}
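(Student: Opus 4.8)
The plan is to unpack the average condition~(\ref{average_cond}) directly in terms of effects and to observe that, once it is required to hold for \emph{all} assignment functions $f$ and $g$ with a fixed apparatus $\W^U$, it collapses to a pair of operator marginal identities that are precisely the defining relations of joint measurability. No deep machinery is needed here; the whole content is a linearity argument combined with the definition of a joint POVM given above.

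First I would write out the three average operators explicitly,
\begin{align*}
    \hat W^U_w = \sum_{a,b} \bigl(g(b)-f(a)\bigr)\,\W^U_{ab}, \quad \hat B^U_g = \sum_b g(b)\,\B^U_b, \quad \hat A_f = \sum_a f(a)\,\A_a,
\end{align*}
so that the average condition $\hat W^U_w = \hat B^U_g - \hat A_f$ becomes
\begin{align*}
    \sum_b g(b)\,\Bigl(\sum_a \W^U_{ab}\Bigr) - \sum_a f(a)\,\Bigl(\sum_b \W^U_{ab}\Bigr) = \sum_b g(b)\,\B^U_b - \sum_a f(a)\,\A_a.
\end{align*}
Since $f$ and $g$ range over arbitrary functions, I would isolate their coefficients by choosing indicator assignments: setting $g=0$ and $f=\delta_{a,a_0}$ forces $\sum_b \W^U_{a_0 b} = \A_{a_0}$, while $f=0$ and $g=\delta_{b,b_0}$ forces $\sum_a \W^U_{a b_0} = \B^U_{b_0}$. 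Conversely, if these two marginal relations hold then the displayed identity is satisfied for every $f$ and $g$ by linearity. Thus the average condition (imposed for all $f,g$) is \emph{equivalent} to $\W^U$ having $\A$ and $\B^U$ as its marginals.

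It then remains to match this with joint measurability. A candidate $\obsW^U=(\W^U,w)$ is admissible only if $\{\W^U_{ab}\}$ is itself a POVM, i.e.\ $\W^U_{ab}\ge 0$ with $\sum_{ab}\W^U_{ab}=\id$; the normalisation is automatic once the marginals equal $\A$ (since $\sum_a\A_a=\id$), so the only requirement beyond the marginal identities is positivity of the $\W^U_{ab}$. But a positive $\W^U$ with marginals $\A$ and $\B^U$ is exactly a joint observable for the pair $(\A,\B^U)$, which by the definition given above exists if and only if $\A$ and $\B^U$ are jointly measurable. Running this argument for every intermediate unitary $U$ delivers the stated equivalence. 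The only point demanding a little care --- and the nearest thing to an obstacle --- is the ``for all $f,g$'' reduction: one must justify that imposing the operator identity for every value assignment is genuinely equivalent to the coefficient-wise marginal identities, which the indicator-function choice above makes rigorous.
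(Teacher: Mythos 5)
Your proposal is correct and follows essentially the same route as the paper's own proof: the if-direction is the same linearity computation, and your indicator assignments $f=\delta_{a,a_0}$, $g=\delta_{b,b_0}$ are exactly the paper's choice $f(a')=1$, $f(a\neq a')=g(b)=0$ for extracting the marginal identities. Your explicit remark that positivity and normalisation of $\W^U$ are what make the marginal conditions amount to joint measurability is a minor clarification the paper leaves implicit.
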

\begin{proof}
For the if-direction we assume that $\W^U$ is a joint POVM of $\A$ and $\B^U$. 
\begin{align}
    \hat W^U_w &= \sum_{a b} \W^U_{a b} w(a,b) 
    =\sum_b \B^U_b g(b) - \sum_a \A_a f(a) \notag \\ 
    &=  \hat B^U_g - \hat A_f.   
\end{align}
For the only-if-direction we fix an outcome $a'$ and set $f(a') = 1$, $f(a \neq a') = g(b) = 0$. Plugging this choice into the average condition we get $\sum_{b} \W_{a' b}^U = \A_{a'}$.
Applying the same reasoning to all POVM elements $\B_b^U$ and $\A_a$, we can conclude that
\begin{align}
    \sum_a \W_{ab}^U = \B_b^U &&  \sum_b \W_{ab}^U = \A_a.
\end{align}
Thus, $\W^U$ is a joint POVM of $\A$ and $\B^U$.
\end{proof}
\begin{corollary*}\label{corollary:1}
A work observable does \textit{not} exist if the measurements $\A$ and $\B^U$ are incompatible for some unitary $U$. A no-go result for projective measurements immediately follows from this condition since joint measurability is equivalent to commutativity in the projective case and two non-trivial projective observables cannot commute for every $U$.
\end{corollary*}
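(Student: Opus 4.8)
The plan is to split the statement into its two assertions and to dispatch the first one as a direct contrapositive of the Lemma. A work observable must, in particular, satisfy the average condition~(\ref{average_cond}) for every choice of $f$ and $g$. As its proof shows, the Lemma identifies, at each fixed process $U$, the solvability of~(\ref{average_cond}) for all $f,g$ with the existence of a joint POVM $\W^U$ of $\A$ and $\B^U$, i.e.\ with their joint measurability. Hence if $\A$ and $\B^U$ are incompatible for even a single $U$, no $\W^U$ reproducing~(\ref{average_cond}) exists for that process, and \emph{a fortiori} no observable $\obsW^U$ meeting both~(\ref{average_cond}) and~(\ref{eq:fluctuation-condition}) can exist. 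No further computation is needed here.

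For the second assertion I would first invoke the stated equivalence that, for projective measurements, joint measurability coincides with commutativity, so that incompatibility of $\A=\{\Pi_a\}$ and $\B^U=\{U^\dagger\Pi'_b U\}$ amounts to the failure of $[\Pi_a,\,U^\dagger\Pi'_b U]=0$ for some $a,b$. It then remains to show that two \emph{non-trivial} projective observables can never commute for every $U$, equivalently that one can always exhibit a single $U$ breaking commutativity. Here non-triviality means that each observable carries at least two non-zero orthogonal projectors.

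The construction I have in mind is basis-local. I would pick unit vectors $|1\rangle,|2\rangle$ in the ranges of two distinct projectors $\Pi_{a_1},\Pi_{a_2}$ of $\A$, and $|1'\rangle,|2'\rangle$ in the ranges of two distinct projectors $\Pi'_{b_1},\Pi'_{b_2}$ of $\B$, and then define $U$ on $\mathrm{span}\{|1\rangle,|2\rangle\}$ by the tilted map
\begin{equation}
  U|1\rangle=\tfrac{1}{\sqrt2}\bigl(|1'\rangle+|2'\rangle\bigr),\qquad
  U|2\rangle=\tfrac{1}{\sqrt2}\bigl(|1'\rangle-|2'\rangle\bigr),
\end{equation}
extended to any unitary sending the orthogonal complement of $\mathrm{span}\{|1\rangle,|2\rangle\}$ onto that of $\mathrm{span}\{|1'\rangle,|2'\rangle\}$. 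A direct evaluation then gives $U^\dagger\Pi'_{b_1}U\,|1\rangle=\tfrac12(|1\rangle+|2\rangle)$, whence $[\Pi_{a_1},\,U^\dagger\Pi'_{b_1}U]\,|1\rangle=\tfrac12|1\rangle-\tfrac12(|1\rangle+|2\rangle)=-\tfrac12|2\rangle\neq0$. Thus $\A$ and $\B^U$ fail to commute, are therefore incompatible, and by the first assertion the work observable cannot exist; this recovers the projective no-go result.

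The only genuinely non-trivial step is choosing $U$; everything else is bookkeeping. The point I would watch most carefully is the interplay with the non-triviality hypothesis: a trivial observable $\{\id\}$ offers no second projector, the tilt cannot be set up, and indeed $\B^U$ then commutes with every $\A$, so the hypothesis is precisely what the argument consumes. I would also verify that $U$ extends to a global unitary — which is guaranteed since both chosen subspaces are two-dimensional and hence have orthogonal complements of equal dimension — and note that, being basis-local, the construction goes through verbatim in any Hilbert space of dimension $\geq2$.
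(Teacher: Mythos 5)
Your proposal is correct and follows the same route the paper intends: the first assertion is the contrapositive of the only-if direction of the Lemma (which, as you note, works pointwise in $U$), and the projective no-go follows from the equivalence of joint measurability and commutativity. The explicit tilted unitary you construct to witness non-commutation of two non-trivial projective observables is a detail the paper asserts without proof; your computation of $[\Pi_{a_1},\,U^\dagger\Pi'_{b_1}U]\,|1\rangle=-\tfrac12|2\rangle$ and the extension argument both check out.
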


The fluctuation condition (\ref{eq:fluctuation-condition}) depends --- contrary to condition (\ref{average_cond}) --- on the choice of the instrument $\Inst^\A$ implementing the first measurement and, thus, further restricts the form of the joint POVM $\W^U$. However, in the following   {paragraph} we show first that such  {an unsharp} observable can indeed exist in a  {non-trivial and} physically well motivated quantum thermodynamical scenario. Then we investigate the role of the  {energy assignments} $f,g$ and how they could be chosen based on different physical arguments.

\vspace{1em}
\paragraph{A  {model for a} joint observable of   {unsharp energies} ---}
We construct an  {unsharp} observable $\obsW^U$ and show that both conditions~(\ref{average_cond}) and~(\ref{eq:fluctuation-condition}) can be satisfied. 
Let $\hat H_A = \sum_{a=1}^d E_a \Pi_a$ and $\hat H_B=\sum_{b=1}^d E'_b \Pi'_b$ be two Hamiltonians of a $d$-dimensional quantum system.
 {Let us consider an apparatus which measures a mixture of the projectors instead of the perfectly sharp elements themselves. Thus, the effects have the form
\begin{align}
    \A_a = \sum_i R_{ai} \, \Pi_i, && \B_b = \sum_j S_{bj}\, \Pi'_j,
\end{align}
where $R$ and $S$ are bistochastic matrices.
The simplest one-parameter model of this form, often considered in the literature, is given by symmetric matrices $R_{a=i} = \lambda + (1-\lambda)/d$, $R_{a \neq i} = (1-\lambda)/d$ and $S_{b=j} = \gamma + (1-\gamma)/d$, $S_{b \neq j} = (1-\gamma)/d$~\cite{busch_unsharp_1986,uola_adaptive_2016,designolleIncompatibilityRobustnessQuantum2019}. (This choice is basically equivalent to a depolarizing noise which is omnipresent in many experimental setups~\cite{vovroshSimpleMitigationGlobal2021,urbanekMitigatingDepolarizingNoise2021}.)
We will stick to this model here, since it allows for a fully analytical treatment for any dimension $d$.}
{Thus, the}  {unsharp effects are given by the original projectors mixed with the identity}  {(sometimes referred to as adding white noise).}
\begin{align}
	\label{eq:noisy-POVMs}
	\A_a = \lambda \Pi_a + \frac{1-\lambda}{d} \id, &&
	\B_b = \gamma\, \Pi'_b + \frac{1-\gamma}{d} \id ,
\end{align}
where $0 < \lambda < 1$ and $0 < \gamma < 1$ are $\textit{visibility}$ parameters interpolating between the sharp energy measurements ($\lambda, \gamma = 1$) and the trivial POVM ($\lambda, \gamma = 0$). The first measurement is implemented by a Lüders instrument, i.e., $\Inst^\A_{a}(\rho) = \A_a^{1/2} \, \rho \, \A_a^{1/2}$. 
As an ansatz for the joint POVM $\W^U$ we propose the following form for its effects:
\begin{align}
	\label{eq:joint-effects}
	\W^U_{a b} = \A_a^{1/2} \, \C_b \, \A_a^{1/2},
\end{align}
where $\C$ is an auxiliary POVM which is constructed in the following. From the average condition (\ref{average_cond}) we know that the marginals of $\W^U_{a b}$ must agree with the POVMs $\A$ and $\B^U$, so we have $\B_b^U = \sum_a \A_a^{1/2} \, \C_b \, \A_a^{1/2}  = \mathcal{I}_\A(\C_b)$
and, therefore,
\begin{align}
	\label{eq:C}
	\C_b  = \mathcal{I}^{-1}_\A(B_b^U) = \mathcal{I}^{-1}_\A(\Psi_\gamma(U^\dagger\, \Pi_b' \, U)),
\end{align}
with the  {depolarizing} channel $\Psi_\gamma(X) = \gamma X + (1-\gamma)/d\, \id$.

Since measurement $\A$ is implemented by a Lüders instrument, the map $\mathcal{I}_\mathsf{A}$ is unital. The inverse map (which exists for any $\lambda,\gamma\neq 1$) is unital, too, and, thus, also the marginal condition for $\A$ is satisfied: \mbox{$\sum_b \A_a^{1/2} \, \mathcal{I}^{-1}_\A  (\B_b^U)\, \A_a^{1/2} = \A_a$}.

One can verify that the operators defined by Eqns.~(\ref{eq:joint-effects},\ref{eq:C}) fulfil the fluctuation condition (\ref{eq:fluctuation-condition}) for any input state diagonal in the initial energy basis $\{\Pi_a\}$ (see Supplemental Material~\footnote{See Supplemental Material at [URL will be inserted by
publisher] for details about the fluctuation condition and the derivation of the visibility bounds, which includes Ref.~\cite{bertlmannBlochVectorsQudits2008}}). Since the average condition (\ref{average_cond}) is satisfied by construction, it only remains to establish bounds for the visibility parameters $\lambda$ and $\gamma$ that guarantee positivity of $\W^U$ for any intermediate evolution $U$.  

It is well known that  {any} two  {given} POVMs $\A$ and $\B$  {become} jointly measurable if enough white noise is added  {as in Eq.~(\ref{eq:noisy-POVMs})}~\cite{uola_adaptive_2016}. In our case the situation is slightly different because we require joint measurability of $\A$ with the Heisenberg $\B^U$ for \textit{arbitrary} $U$. Thus, while the  {unsharp} energy measurements $\A$ and $\B$ are fixed, the joint POVM $\W^U$ has to depend on the evolution. Therefore, we are looking for parameters $\lambda$ and $\gamma$ such that $\W_{a b}^U \geq 0, \forall a,b,U$.

By construction, $\W_{a b}^U$ is positive if $\C_b$ is positive. Thus, from Eq.~(\ref{eq:C}) we see that the concatenation of the depolarizing channel with the inverse instrument $\Xi = \Inst^{-1}_\A \circ \Psi_\gamma$ needs to be a positive (but not necessarily completely positive) map. 
A map $\Xi$ is positive if its Choi state has a non-negative expectation value for any product state~\cite{zyczkowskiDualityQuantumMaps2004a}. 
In our case, this condition can be checked analytically for any dimension $d$ (see Supplemental Material~\cite{Note1}). The map $\Xi$ is positive, and therefore $\W^U$ is a valid POVM if the following relation for the two visibility parameters $\lambda$ and $\gamma$ holds:
\begin{align}
 	\label{eq:noise-condition}
    \gamma \leq \frac{2 \kappa}{d+2 \kappa -d \kappa},
\end{align}
with
\begin{align}
	\label{eq:kappa}
	\kappa = 2 \sqrt{\lambda + \frac{1-\lambda}{d}}\sqrt{\frac{1-\lambda}{d}} + (d-2)\frac{1-\lambda}{d}.
\end{align}
Thus, for $\lambda,\gamma$ satisfying Eq.~(\ref{eq:noise-condition}) the GTPM scheme can be replaced by a single work observable $\obsW^U$.

For $d>2$ the visibilities $\lambda$ and $\gamma$ are strictly smaller than the best known bounds for joint measurability of white noise affected measurements~\cite{designolleIncompatibilityRobustnessQuantum2019}. This is expected since condition (\ref{eq:fluctuation-condition}) imposes an additional constraint. The ansatz for the effects $\W^U_{a b}$ in Eq.~(\ref{eq:joint-effects}) is a priori not the most general one~\cite{heinosaariUniversalitySequentialQuantum2015}. However, numerical simulations suggest that our  {visibility} bounds are tight and we conjecture that our construction is optimal to satisfy condition (\ref{eq:fluctuation-condition}) for the given model Eq.~(\ref{eq:noisy-POVMs}) (see Supplemental Material for details~\cite{Note1}).

\vspace{1em}
\paragraph{Energy assignments ---}
Up to now we kept $f$ and $g$ arbitrary. 
To establish the connection to thermodynamic quantities we will now consider possible assignments of energies to the measurement outcomes. 
The visibility bounds derived above ensure that a measurement apparatus exists that can replace the GTPM scheme. The measurement outcomes of such an apparatus can then be used to determine different quantities of interest depending on the chosen assignments $f$ and $g$. For the following physically motivated choices we always require that the energy eigenvalues are obtained in the limit of  {maximal visibility} $f(a) \stackrel{\lambda \to 1}{=} E_a$, $ g(b) \stackrel{\gamma \to 1}{=} E'_b$.

\paragraph{Average work ---} The first obvious choice are the energy eigenvalues themselves  $f(a) = E_a$, $g(b)= E'_b$. The experimenter assigns energies as if the measurements were  {projective}. Clearly, such an assignment does not yield the correct energy expectation values of the Hamiltonians $\hat H_A$ and $\hat H_B$.  {Since the measured effects are unsharp, the determined energy is}
shifted towards the average energy of a fully mixed state $\overline{E}_{A,B} = \nicefrac{1}{d}\Tr[H_{A,B}]$. 
 {Thus, the experimenter compensates for this shift} by assigning corrected energies
\begin{align*}
    f(a) = \frac{1}{\lambda}\, E_a - \frac{1 - \lambda}{\lambda} \overline{E}_A, &&
    g(b) = \frac{1}{\gamma} \, E'_b - \frac{1 - \gamma}{\gamma}\overline{E}_B.
\end{align*}
This choice leads to the correct average energies of the marginal observables. Condition (\ref{average_cond}) then ensures that the average of the joint observable $\obsW^U$ agrees with the average work $\hat W^U_{w} = \hat H_B - \hat H_A = \hat W_\textrm{avg}$. Thus, even though the measurements are  {unsharp}, by  {correctly} choosing $f$ and $g$ the work observable $\obsW$ yields  {the} accurate value for the average work in the system  {for any input state}.  {This might seem surprising at first glance. However, the unsharp POVMs are informationally equivalent to the projective energy measurements in the sense that the former can be used to reconstruct the statistics of the latter.}

\paragraph{Free energy difference ---}
We say a GTPM scheme satisfies a generalized Jarzynski equation (JE) if the following equation holds for any unitary $U$:
\begin{align}
    \label{eq:JE}
	\sum_{a,b} \tr[\B_b\, U \Inst^\A_a (\rho_0) U^\dagger] e^{- \beta (g(b) - f(a))} = e^{-\beta \Delta \tilde F},
\end{align}
where $\Delta \tilde F$ is real and independent of the unitary $U$ on the left hand side. This can in general only be satisfied for any $U$ if we demand
\begin{align}
	\label{eq:summing-to-identity}
	\sum_a e^{+\beta f(a)} \Inst^\A_a[\rho_0] =  \alpha \id,
\end{align}
where $\alpha$ is a positive real number~\cite{rasteginNonequilibriumEqualitiesUnital2013}. Here we only consider $\rho_0 = \rho_\textrm{Gibbs}$. In the limit of projective Lüders measurements ($\A_a^{1/2} = \Pi_a$), the standard JE is immediately recovered if we set $\alpha = 1/Z_A$, where $Z_A = \sum_a \exp[-\beta E_a]$, and we get $\Delta \tilde F = \Delta F$ on the right hand side of Eq.~(\ref{eq:JE}). Thus, as is well known, a projective TPM scheme can be used to determine the free energy difference $\Delta F$~\cite{TalknerFluctuationtheoremsWork2007}. However, the projective measurements cannot be replaced by a work observable $\obsW^U$. Accordingly, the correct average energies cannot  {be obtained from the same experimental setup.} 

Interestingly, Eq.~(\ref{eq:summing-to-identity}) can also be satisfied by   {unsharp} measurements if a suitable assignment $f$ is chosen:
\begin{align}
\label{eq:fas}
	f(a) = \frac{1}{\beta} \ln\left [ \frac{\alpha \, Z_A}{\lambda} \left( e^{+\beta E_a} - \frac{1-\lambda}{ d} \sum_i e^{+\beta E_i} \right) \right].
\end{align}
We can set again $\alpha = 1/Z_A$. Demanding $\Delta \tilde F = \Delta F$, we then find the assignment $g(b) = E_b'$. 

Therefore, if we choose visibilities satisfying Eq.~(\ref{eq:noise-condition}), condition (\ref{eq:fluctuation-condition}) guarantees that a joint observable $\obsW^U$ exists such that
\begin{align}
    \Tr\left[ \sum_{a,b} e^{-\beta w(a,b)} \W_{a b}^U\, \rho_\textrm{Gibbs} \right] = \frac{Z_B}{Z_A} = e^{-\beta \Delta F}.
\end{align}
Accordingly, we can use the POVM $\W^U$ to also obtain the free energy difference $\Delta F$ between the initial and final Hamiltonian. 

{It is worth noting that the GTPM scheme, unlike the projective one, generally yields fluctuations also in the trivial case of $\hat H_A=\hat H_B$ and $U = \id$, due to the inherent non-repeatability of unsharp measurements~\cite{buschRepeatableMeasurementsQuantum1995,heinosaari_mathematical_2011}. Importantly, the vanishing average work and free energy difference is, of course, correctly obtained also for this trivial scenario.}

\paragraph{Conclusion ---}
{The idealized projective TPM scenario rules out the existence of a work observable $\obsW^U$ that satisfies both an average and a fluctuation condition. Our results show that this is no longer true for the general scenario using unsharp measurements. The answer to the question of whether a work observable can exist is richer in this case and demands the consideration of joint measurability.}
{The} no-go theorem from Ref.~\cite{Perarnau-LlobetNoGoTheoremCharacterization2017} {can be extended} to the case of general energy POVMs {if the} two measurements $\A$ and $\B$ are incompatible for some intermediate unitary evolution $U$. 
However, if $\A$ and $\B^U$ are jointly measurable for every $U$, the no-go theorem can be overcome and a work observable $\obsW^U$ may exist. By explicit construction of $\obsW^U$ for a specific model  {of unsharp measurements} we have shown that a general TPM scheme can indeed be replaced by a single measurement in a physically well motivated scenario.

For non-projective measurements a unique correspondence between POVM elements and associated energies is missing. Energy assignment functions $f(a)$ and $g(b)$  {have to be specified.} The existence of $\obsW^U$ is independent of that choice. 
 {However, $f$ and $g$ are determined by the quantity one would like to estimate from the measurement outcomes and depend on the effects of the POVM.}
For the specific  {model} we show that a single measurement apparatus implementing the POVM $\W^U$ can be used to determine both the correct average work $W_\textrm{avg}$ for any input state and the correct free energy difference $\Delta F$ between the Hamiltonians $\hat H_A$ and $\hat H_B$ if the system starts in a Gibbs state.

Theoretical investigations on quantum work and fluctuations  {should embrace the rich possibilities offered by quantum measurement theory.}  
 {Our work highlights the central role of joint measurability for TPM schemes.}
 {We show yet again that unsharp measurements enable to achieve goals that are unreachable with projective ones, here with important applications in quantum thermodynamics.}
Joint measurability offers an operationally motivated approach to the quantum-to-classical transition and we have shown that this transition is closely related to the existence of a work observable.  {In the future} it will be interesting to explore how other operational characterizations of quantumness, such as inherent measurement disturbance, emerge in thermodynamic settings.

\clearpage
\newpage
\section{Supplemental Material}

\renewcommand{\theequation}{S.\arabic{equation}}
\setcounter{equation}{0}

\section{The observable fulfils the fluctuation condition} 
 
To see that also the fluctuation condition (ii) is satisfied by the proposed joint observable, we have to show that it preserves the TPM statistics with respect to input states diagonal in the first energy basis
\begin{align}
	\label{eq:diagonal-state}
	\rho_0 = \sum_k p_k \, \Pi_k,
\end{align}
with $p_k > 0$ and $\sum_k p_k =1$.
All POVM elements of the first measurement $\A_m$ are also diagonal in the first energy basis and can be written as
\begin{align}
	\A_m = \sum_i \alpha_{m,i} \, \Pi_i.
\end{align} 
The measurement statistics is then given by
\begin{align}
	p&_\textrm{GTPM}(a,b) = \Tr[\B_b^U\, \A_a^{1/2}\, \rho_0\, \A_a^{1/2}  ]\notag\\
	&= \Tr\left[\left(\sum_m \A_m^{1/2} \, \C_b\, \A_m^{1/2}   \right) \, \A_a^{1/2}\, \rho_0\, \A_a^{1/2} \right]\notag\\
	&=\Tr \left[\left(\sum_{m,i,j} \sqrt{\alpha_{m,i}} \, \Pi_i\, \C_b\, \Pi_j\sqrt{\alpha_{m,j}}    \right) \, \right.\notag\\ &\hspace{4em} \left.\times \sum_{p,k,q} \sqrt{\alpha_{a,p}} \, \Pi_p \, p_k\, \Pi_k \, \Pi_q  \sqrt{\alpha_{a,q}} \right] \notag\\
	&=\Tr \left[\sum_{k,m} \alpha_{m,k}\, \Pi_k\, \C_b \, \Pi_k \, \alpha_{a,k}\, p_k \right] \notag\\
	&=\Tr \left[\sum_{k} \Pi_k\, \C_b \, \Pi_k \, \alpha_{a,k}\, p_k \right] \notag \\
	&=\Tr \left[\C_b  \, \sum_{p,k,q}\sqrt{\alpha_{a,p}}\,\Pi_p\,   \Pi_k\, \Pi_q \, \sqrt{\alpha_{a,q}} \,p_k \right]\notag\\
	&=\Tr\left[\A_a^{1/2}\, \C_b \, \A_a^{1/2}\, \rho_0  \right] = \Tr[\W_{a b}^U\, \rho_0]\notag \\
	&=p_\W(a,b).
\end{align}
Accordingly, the fluctuation condition is satisfied. 

\section{Derivation of the visibility bounds}
Here we explicitly derive the relation between $\gamma$ and $\lambda$ that guarantees positivity of the effects $\C$ and, therefore, the existence of a joint POVM $\W^U$. We have $\C_b = \Inst^{-1}_\A \circ \Psi_\gamma [U^\dagger \, \Pi_b' \, U]$. Thus, the map $\Xi = \Inst^{-1}_\A \circ \Psi_\gamma$ has to be positive (but not necessarily completely positive).  

First of all, we calculate the inverse instrument channel $\Inst_{\A}^{-1}$. To this end we write the map $\Inst_{\A}$ in a generalized Bloch vector representation. Any density operator $\rho$ of a $d$-dimensional quantum system can be decomposed as 
\begin{align}
	\label{eq:Bloch-representation}
	\rho &= \sum_{\mu=0}^{d^2-1}  \Tr[G_\mu\, \rho]\, G_\mu = \sum_{\mu=0}^{d^2-1} \Gamma_\mu \, G_\mu \notag\\
	&= r_0 G_0 + \sum_{\mu=1}^{d^2-1} \mathbf{r}_\mu \, G_\mu ,
\end{align}
where $G_0 = \id/\sqrt{d}$ and the $\{G_1,\ldots,G_{d^2-1}\}$ are the Hermitian and traceless generators of the group $SU(d)$ with $\Tr[G_\mu\,G_\nu] = \delta_{\mu\nu}$. The vector $\mathbf{r}$ is called the Bloch vector of $\rho$.

Fixing an orthonormal basis $\{\ket{a}\}$ of the $d$-dimensional Hilbert space, the $G_\mu$ are explicitly given by~\cite{bertlmannBlochVectorsQudits2008}
\begin{align}
\label{eq:Gell-Mann}
	&\{G_\mu\}_{\mu=1}^{\frac{d}{2}(d-1)} = \frac{1}{\sqrt{2}}(\kb{j}{k}+\kb{k}{j}),\\
	&\{G_\mu\}_{\mu=\frac{d}{2}(d-1)+1}^{d^2-d} = -\frac{i}{\sqrt{2}}(\kb{j}{k}-\kb{k}{j}),\\
	&\{G_\mu\}_{\mu=d^2-d}^{d^2-1} =\sqrt{\frac{1}{l(l+1)}}\left(\sum_{n=1}^l\kb{n}{n}-l\kb{l+1}{l+1}\right),
\end{align} 
where $1\leq j < k \leq d$ and $1\leq l \leq d-1$.
A matrix representation of the instrument channel $\Inst_{\A}$ is then given by
\begin{align}
	\label{eq:map-construction}
	\Lambda_{\mu\nu} = \Tr[G_\mu\, \Inst_{\A}(G_\nu)].
\end{align}
By choosing the basis $\{\ket{a}\}$ to be equal to the energy eigenbasis corresponding to the first measurement, i.e., $\kb{a}{a} = \Pi_a$, one can verify that the matrix $\Lambda_{\mu\nu}$ becomes diagonal with the entries
\begin{align}
	\Lambda_{\mu\mu} = 
	\begin{cases}
	1, \text{if}\ \mu = 0\\
		\kappa, \text{if}\ 1\leq \mu \leq d^2-d\\
		1, \text{otherwise}
	\end{cases} ,
\end{align}
where 
\begin{align}
	\label{eq:kappa_sup}
	\kappa = 2 \sqrt{\lambda + \frac{1-\lambda}{d}}\sqrt{\frac{1-\lambda}{d}} + (d-2)\frac{1-\lambda}{d}.
\end{align}
The depolarizing map $\Psi_\gamma$ is also diagonal in the Bloch representation and reads
\begin{align}
    \Upsilon_{\mu \mu} = 
    \begin{cases}
		1, \text{if}\ \mu = 0\\
		\gamma, \text{otherwise}
	\end{cases} ,
\end{align}
Thus, the full map is given by
\begin{align}
   \left(\Lambda^{-1}\, \Upsilon\right)_{\mu \mu} = 
    \begin{cases}
		1, \text{if}\ \mu = 0\\
		\frac{\gamma}{\kappa} , \text{if}\ 1\leq \mu \leq d^2-d\\
		\gamma, \text{otherwise}
	\end{cases}.
\end{align}
To check its positivity we calculate the corresponding Choi matrix 
\begin{align}
    D = d^2 \sum_{\mu=0}^{4 d^2 - 1} \left((\id \otimes \Lambda^{-1}\, \Upsilon )[\Gamma_{\Phi_+}]\right) G_\mu,
\end{align}
where $\Gamma_{\Phi_+}$ is the generalized Bloch vector of the Bell state $\kb{\Phi_+}{\Phi_+}$ and we have introduced a scaling prefactor $d^2$ for convenience.
Then $D$ has the form
\begin{align}
    D_{i j} =& (1-\gamma)\delta_{i j} + 3\gamma \sum_{n=1}^{d-1} \delta_{i j} \, \delta_{i, 1+ n(d+1)} \notag\\
    &+ \frac{3\gamma}{\kappa}\sum_{\substack{n,m = 1 \\ n \neq m}}^{d-1} \delta_{i,1+n(d+1)}\, \delta_{j,1+m(d+1)}. 
\end{align}
We define a generic product state
\begin{align}
    &\ket{\psi} = \ket{\psi_a} \otimes \ket{\psi_b}, \quad \textrm{with} \notag\\
    &\ket{\psi_a} = \sum_{i=1}^d a_i \ket{i}, \quad \ket{\psi_b} = \sum_{i=1}^d b_i \ket{i} ,
\end{align}
that we assume to be nomalized, i.e., \mbox{$\sum_i |a_i|^2 = \sum_i |b_i|^2 =1$}, w.o.l.g.
For the expectation value of $D$ we then find
\begin{align}
    \bra{\psi} D \ket{\psi} =& \sum_{i,j} |a_i|^2 |b_j|^2 + (d-1)\, \gamma \sum_i |a_i|^2 |b_i|^2 \notag \\
     & - \gamma \sum_{i\neq j} |a_i|^2 |b_j|^2 + \frac{d \, \gamma}{\kappa}\sum_{i\neq j} a_i b_i a_j^* b_j^* \notag \\
     =& 1 - \gamma + d\,\gamma \sum_i|a_i|^2 |b_i|^2 + \frac{d\,\gamma}{\kappa}\sum_{i\neq j} a_i b_i a_j^* b_j^* \notag \\
     =& 1 - \gamma + d\,\gamma \left( \sum_{i,j} |a_i|^2 |b_j|^2 - \sum_{i\neq j} |a_i|^2 |b_j|^2 \right) \notag \\
     & + \frac{d\,\gamma}{\kappa}\sum_{i < j} 2 \Re[a_i b_i a_j^* b_j^*] \notag\\
     =& 1 - \gamma + d\,\gamma - d\,\gamma \sum_{i\neq j} |a_i|^2 |b_j|^2 \notag\\
     &+ \frac{2 d\,\gamma}{\kappa}\sum_{i < j} \Re[a_i b_i a_j^* b_j^*]
 \end{align}
The last term is lower bounded by $\frac{d\,\gamma}{2 \kappa}$ and this minimum is attained if we choose, e.g., $a_1 = a_2 = b_1 =  1/\sqrt{2}$ and $b_2 = -1/\sqrt{2}$ while all other entries being zero. This choice also minimizes the second last term and we obtain the overall minimum
\begin{align}
    \min_{\ket{\psi} = \ket{\psi_a} \otimes \ket{\psi_b}} \bra{\psi} D \ket{\psi} = 1 - \gamma - \frac{d\,\gamma}{2} - \frac{d\,\gamma}{2 \kappa}. 
 \end{align}
Thus, if we require that the minimum must be non-negative for the map $\Xi$ to be positive, we get the desired condition for $\gamma$ and $\kappa$.
\begin{align}
    \gamma \leq \frac{2 \kappa}{d+2 \kappa -d \kappa},
\end{align}
with
\begin{align}
	\kappa = 2 \sqrt{\lambda + \frac{1-\lambda}{d}}\sqrt{\frac{1-\lambda}{d}} + (d-2)\frac{1-\lambda}{d}.
\end{align}

In Fig.~\ref{fig:lambda-comparison} we plot the bound for the visibility as a function of dimension $d$ for the symmetric case $\gamma = \lambda =\lambda_\textrm{sym}$. For comparison we also plot the corresponding values for the best known bound for a general joint POVM $\lambda_\textrm{opt}$ and the bound $\lambda_\textrm{mub}$ for measurements, where the underlying sets of noiseless projectors $\{\Pi_a\}$ and $\{\Pi'_b\}$ are mutually unbiased bases
\begin{align}
    \lambda_\textrm{opt} &= \frac{d-2 + \sqrt{d^2+4d -4}}{4(d-1)}, \\
    \lambda_\textrm{mub} &= \frac{1}{2}\left( \frac{1}{\sqrt{d}+1}\right).
\end{align}
The latter case does, of course, not apply for our scenario since the intermediate unitary $U$ will in general not lead to mutually unbiased bases. One can however wonder if the visibilities $\lambda_\textrm{opt}$ could be reached in our scenario. One has to keep in mind that the joint POVM in our scheme has to satisfy also the fluctuation condition (i). We ensure this by the special form of the joint POVM given in Eq.~(5) of the main text. The derivation of the general bound $\lambda_\textrm{opt}$ employs another form of the joint POVM which does not fulfil the fluctuation condition in general~\cite{designolleIncompatibilityRobustnessQuantum2019}.

\begin{figure}[ht]
  \centering
  \includegraphics[width=\columnwidth]{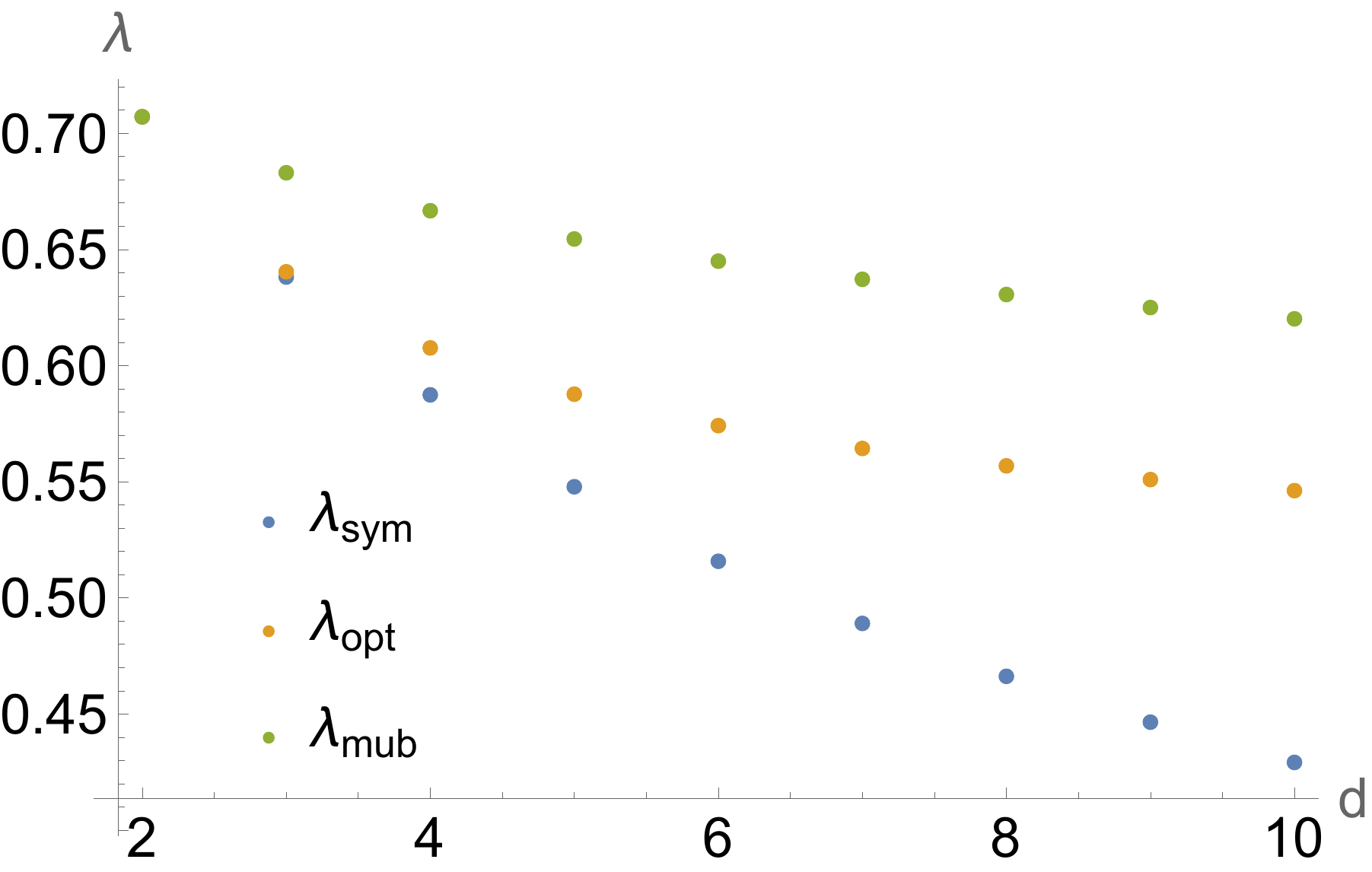}
  \caption{The plot shows the critical symmetric visibilities $\lambda = \gamma$ for dimensions $2\leq d \leq 10$. $\lambda_\textrm{opt}$ shows the best bound for our noise model known from the literature~\cite{designolleIncompatibilityRobustnessQuantum2019}. If one requires agreement also with condition (ii) we obtain lower visibilities $\lambda_\textrm{sym}$. We conjecture that this bound is tight. For comparison we also plot the bound $\lambda_\textrm{mub}$ known for mutually unbiased bases $\{\Pi_a\}$ and $\{U^\dagger \, \Pi'_b \, U\}$. As one can see, mutual unbiased measurements are not the most incompatible ones but allow for higher visibilities than the general case.}
  \label{fig:lambda-comparison}
\end{figure}

Based on numerical simulations we can only conjecture that our joint POVM $\W^U$ leads to the best possible visibility $\lambda_\textrm{sym} = \lambda = \gamma$ when condition (ii) is imposed. 
For a fixed visibility $\lambda_\textrm{sym}$ and a given unitary $U$ a suitable joint POVM $\W^U$ can be found through a semidefinite program (SDP). 
\begin{align}
    \min \sum_{a,b,k} &\left|\Tr[(\K_{a b} - \A_a^{1/2}\B_b^U \A_a^{1/2})\, \Pi_k]\right| \notag \\
    \textrm{s.t.:}& \notag \\
    &\sum_b \K_{a b} = \A_a  \\
    &\sum_a \K_{a b} = \B_b^U 
\end{align}
The first line minimizes the distance between the TPM statistics and the statistics given by the joint observable for diagonal input states. The second and third line ensure that the joint POVM $\K$ has the correct marginals.
If the problem is feasible a joint POVM $\W^U_{a b}=\K_{a b}$ exists for the chosen visibility $\lambda_\textrm{sym}$ and the unitary $U$. If the minimization reaches zero, the joint POVM also satisfies the fluctuation condition (ii). By randomly sampling unitaries $U$ and scanning through $0 < \lambda_\textrm{sym} < 1$ we can numerically reproduce our analytical visibility bounds. Therefore, we conjecture that the proposed form of the joint POVM is optimal for the given noise model if one wants to satisfy the fluctuation condition (ii). The Supplemental Material includes a basic \textsc{matlab} script which implements this procedure to find the best numerical bounds.

\bibliography{bibliography.bib}

\end{document}